\def\<{\langle}
\def\>{\rangle}
\newcommand{\tr}{\mathrm{Tr}}
\newcommand{\Tr}{\mathrm{Tr}}
\def\oper{{\mathchoice{\rm 1\mskip-4mu l}{\rm 1\mskip-4mu l}
{\rm 1\mskip-4.5mu l}{\rm 1\mskip-5mu l}}}
\DeclareMathAlphabet\mathbfcal{OMS}{cmsy}{b}{n}
\mathchardef\mhyphen="2D 
\newtheorem{Theorem}{Theorem}
\newtheorem{Remark}{Remark}
\begin{document}

\title{Enhancing phase-covariant channel performance with non-unitality}
	
	\author{Katarzyna Siudzi{\'n}ska$^{1}$ and Micha{\l} Studzi\'nski$^{2}$}
	\affiliation{
	    $^1$ Institute of Physics, Faculty of Physics, Astronomy and Informatics \\  Nicolaus Copernicus University in Toru\'{n}, ul. Grudzi\k{a}dzka 5, 87--100 Toru\'{n}, Poland\\
	    $^2$ Institute of Theoretical Physics and Astrophysics, University of Gda\'nsk, National Quantum Information Centre, 80-952 Gda\'nsk, Poland}

\begin{abstract}
We analyze quantum communication properties of phase-covariant channels depending on their degree of non-unitality. In particular, we derive analytical formulas for minimal and maximal channel fidelity on pure states and maximal output purity. Next, we introduce a measure of non-unitality and show how to manipulate between unital and maximally non-unital maps by considering classical mixtures of quantum channels. Finally, we prove that maximal fidelity and maximal output purity increase with non-unitality and present several examples. Interestingly, non-unitality can also prolong quantum entanglement and lead to its rebirth.
\end{abstract}

\flushbottom

\maketitle

\thispagestyle{empty}

\section{Introduction}

Quantum channels, which describe transformations between input and output states, are present in every quantum information and communication processing tasks. However, physical channels are inherently noisy, making their possible applications limited. Therefore, extensive research on optimal methods of information transmission is essential for quantum communication. One way to enhance the amount of reliably transmitted information is to reduce the effects of noise. Among the methods specially tailored for this task are error correction, error mitigation, and error suppression techniques~\cite{Review,QEC}. Another way to approach the problem of detrimental noise is to instead use the noise as a resource \cite{Verstraete,zanardi17,Engineering_capacity,fidelity}. This way, we can enhance the quantities that measure channel transmission properties, like fidelity, purity, or capacity.

A full characterization of quantum channel properties is in general a very challenging undertaking. To make the problem more tractable, one can introduce additional symmetries, like the covariance property of channels. By definition, a quantum channel $\Lambda$ is covariant with respect to unitary representations $U$, $V$ of a finite (or compact) group $G$ if
\begin{equation}
\label{eq:covgen}
\Lambda\left[U(g)\rho U^{\dagger}(g)\right]=V(g)\Lambda[\rho]V^{\dagger}(g)\qquad \forall \ g\in G,
\end{equation}
for any valid density operator $\rho$. Such maps are also known as {\it $G$-covariant}. The seminal results come from Scutaru~\cite{Scutaru}, who proved a Stinespring-type theorem in the $C^{\ast}$-algebraic framework, giving a base for more applicative research. In particular, SU(2)-covariant channels were used to describe entanglement in spin systems~\cite{Schliemann} and dimerization of quantum spin chains~\cite{Nachtergaele2017}. In quantum information, covariant channels help to analyze additivity property of the Holevo capacity~\cite{651037} and minimal output entropy~\cite{irr-cov1,irr-cov2,Fan1,irr-cov3,7790801}. The covariance property also allows to prove strong converse properties for the classical capacity~\cite{PhysRevLett.103.070504} and entanglement-assisted classical capacity~\cite{Datta2016}. There are also known methods on how to construct positive covariant maps~\cite{Kopszak,Studz}.

A special class of $U(1)$-covariant maps consists in phase-covariant qubit maps, for which $U(\phi)=V(\phi)=\exp(-i\sigma_3\phi)$, where $\sigma_3=\rm{diag}(1,-1)$, $\phi\in\mathbb{R}$. Phase-covariant channels provide evolutions that combine pure dephasing with energy absorption and emission~\cite{phase-cov-PRL,phase-cov}. At first, they were introduced phenomenologically in the description of thermalization and dephasing processes that go beyond the Markovian approximation \cite{PC1}. The associated dynamical equations were later derived microscopically for a weakly-coupled spin-boson model under the secular approximation \cite{PC3}. Phase-covariant maps were applied in the contexts of quantum speed evolution~\cite{QSTCov}, non-Markovianity of quantum evolution~\cite{e23030331}, quantum optics~\cite{Marvian_2013,RevModPhys.79.555}, and quantum metrology~\cite{PhysRevA.94.042101}. They play a substantial role in the description of phase covariant devices~\cite{Buscemi:07} and quantum cloning machines~\cite{PhysRevA.62.012302}.

The main goal of our paper is to prove that transition performance can be improved by allowing for non-unitality of quantum channels. This is shown on the example of fidelity and purity measures, which determine the distortion between input and output states. We start with analytical derivations of formulas for minimal and maximal channel fidelity on pure states, as well as maximal output purities in terms of Schatten $p$-norms. The pure states that correspond to the respective extremal values are also provided. Next, we ask about the evolution of quantum entanglement under the assumption that half of a maximally entangled state is sent through the phase-covariant channel. For an entanglement measure, we choose concurrence, which is also related to entanglement formation.

In the main part, we provide important applications for our results. We consider families of quantum channels that differ only by the non-unitality degree. By comparing the fidelity and purity measures, we show that unital maps always display the worst performance for every analyzed measure except the minimal channel fidelity. Actually, this drop in channel performance is monotonically decreasing with the degree of non-unitality -- that is, the closer the channel is to being unital, the smaller the increase of the corresponding measure. Similar behavior is observed for concurrence and entanglement of formation, which measure entanglement between two qubits. In the presented examples, we observe not only how to prolong entanglement but also how to speed up its rebirth after sudden death. Finally, we also show how to engineer the desired non-unitality degree with a classical mixture of the unital and maximally non-unital phase-covariant quantum map. In this case, the probability distribution can be treated as the noise beneficial for the properties of quantum evolution. Finally, it is important to note that the enhanced performance of non-unital channels is observed at any moment in time. This is a novelty compared to previous works on noise suppression by counteracting its effects with another form of noise~\cite{Klesse,fidelity,Engineering_capacity}, where the positive effects were only temporary.

\section{Phase-covariant channels}
\label{sec:PhaseCovChann}

Consider a class of qubit maps covariant with respect to phase rotations on the Bloch sphere, which are represented by a unitary transformation
\begin{equation}
\label{eq:conds}
U(\phi)=\exp(-i\sigma_3\phi),\qquad\phi\in\mathbb{R},\qquad\sigma_3=\begin{pmatrix}
1 & 0\\0 & -1\end{pmatrix}.
\end{equation}
Such maps are called {\it phase-covariant} and satisfy the covariance condition
\begin{equation}
\label{eq:covohase}
\Lambda\left[U(\phi)\rho U^{\dagger}(\phi)\right]
=U(\phi)\Lambda[\rho]U^{\dagger}(\phi)\qquad \forall \ \phi\in \mathbb{R}
\end{equation}
for any input density operator $\rho$. Note that $U(\phi)$ defines a continuous group parameterized with an angle $\phi$. Up to the unitary transformation $\rho\mapsto\exp(-i\sigma_3\theta)\rho\exp(i\sigma_3\theta)$, $\theta\in\mathbb{R}$, the most general form of $\Lambda$ reads~\cite{phase-cov,phase-cov-PRL}
\begin{equation}
\label{eq:actionofLambda}
\Lambda[\rho]=\frac 12 \left[(\mathbb{I}+\lambda_{\ast}\sigma_3)\Tr\rho
+\lambda_1\sigma_1\Tr(\rho\sigma_1)+\lambda_1\sigma_2\Tr(\rho\sigma_2)
+\lambda_3\sigma_3\Tr(\rho\sigma_3)\right],
\end{equation}
where $\sigma_1$, $\sigma_2$, $\sigma_3$ denote the Pauli matrices.
The real numbers $\lambda_1$ and $\lambda_3$ are the eigenvalues of $\Lambda$ to the following eigenvectors,
\begin{equation}
\Lambda[\sigma_1]=\lambda_1\sigma_1,\qquad
\Lambda[\sigma_2]=\lambda_1\sigma_2,\qquad
\Lambda[\sigma_3]=\lambda_3\sigma_3.
\end{equation}
The last parameter, $\lambda_{\ast}$, is responsible for non-unitality -- that is, failing to preserve the identity operator $\mathbb{I}$ ($\Lambda(\mathbb{I})\neq \mathbb{I}$). It also determines the map's invariant state ($\Lambda[\rho_{\ast}]=\rho_{\ast}$), which reads
\begin{equation}\label{rhoast}
\rho_{\ast}=\frac{1}{2}\left[\mathbb{I}+\frac{\lambda_{\ast}}{1-\lambda_3}\sigma_3\right].
\end{equation}
For $\lambda_\ast=0$, one recovers a symmetric subclass of Pauli channels, which are unital qubit maps. In this case, the invariant state $\rho_\ast=\mathbb{I}/2$ is maximally mixed. Therefore, one can say that the non-unitality property of phase-covariant channels is controlled by $\lambda_{\ast}$.
Finally, to ensure that $\Lambda$ is a quantum channel (completely positive, trace-preserving map), its parameters have to satisfy the conditions \cite{phase-cov}
\begin{equation}
|\lambda_\ast|+|\lambda_3|\leq 1,\qquad 4\lambda_1^2+\lambda_\ast^2\leq (1+\lambda_3)^2.
\end{equation}

\section{Performance measures of quantum channels}
\label{sec:PerfMeas}

\subsection{Channel fidelity}

In quantum information theory, the fidelity measures the distance that separates two quantum states \cite{Nielsen,Zyczkowski}. Therefore, it can be used to determine their distinguishability. According to Uhlmann's definition \cite{Uhlmann}, the fidelity between states represented by the density operators $\rho$ and $\sigma$ is given by
\begin{equation}\label{statesfidelity}
F(\rho,\sigma):=\left(\Tr\sqrt{\sqrt{\rho}\sigma\sqrt{\rho}}\right)^2.
\end{equation}
Observe that $0\leq F(\rho,\sigma)\leq 1$, and the equality $F(\rho,\sigma)=1$ holds if and only if $\rho=\sigma$. This formula served as a starting point to introduce the channel fidelity $F(\rho,\Lambda[\rho])$; that is, the fidelity between its input $\rho$ and output $\Lambda[\rho]$ states \cite{Raginsky}. It measures the distortion of an initial state under the use of a channel $\Lambda$. For pure inputs represented by rank-1 projectors $P$, the channel fidelity is bounded by the minimal and maximal fidelity on pure input states \cite{Sommers3},
\begin{equation}\label{channelfidelity}
\begin{split}
f_{\min}(\Lambda)&=\min_PF(P,\Lambda[P])=\min_P \Tr(P\Lambda[P]),\\
f_{\max}(\Lambda)&=\max_PF(P,\Lambda[P])=\max_P \Tr(P\Lambda[P]).
\end{split}
\end{equation}
Due to its concavity property, the minimal fidelity for mixed inputs is reached on a pure state. Hence, $f_{\min}(\Lambda)$ is also the minimal channel fidelity on mixed states. However, $f_{\max}(\Lambda)$ is not the maximal fidelity in general, as the maximal value $\max_\rho F(\rho,\Lambda[\rho])=F(\rho_\ast,\Lambda[\rho_\ast])=1$ is reached on the invariant state $\rho_\ast$ of $\Lambda$.

\begin{Theorem}\label{Th1}
The minimal and maximal channel fidelities on pure input states under the action of a phase-covariant channel are given by the following formulas,
\begin{equation}\label{fidmin}
f_{\min}(\Lambda)=\left\{
\begin{aligned}
&\frac 12 \left(1+\lambda_1-\frac{\lambda_{\ast}^2}{4(\lambda_3-\lambda_1)}\right)\qquad{\rm for}\qquad \lambda_3>\lambda_1,\, |\lambda_\ast|\leq 2(\lambda_3-\lambda_1),\\
&\frac 12 (1+\lambda_3-|\lambda_{\ast}|)\qquad{\rm otherwise},
\end{aligned}\right.
\end{equation}
\begin{equation}\label{fidmax}
f_{\max}(\Lambda)=\left\{
\begin{aligned}
&\frac 12 \left(1+\lambda_1+\frac{\lambda_{\ast}^2}{4(\lambda_1-\lambda_3)}\right)\qquad{\rm for}\qquad \lambda_3<\lambda_1,\, |\lambda_\ast|\leq 2(\lambda_1-\lambda_3),\\
&\frac 12 (1+\lambda_3+|\lambda_{\ast}|)\qquad{\rm otherwise}.
\end{aligned}
\right.
\end{equation}
\end{Theorem}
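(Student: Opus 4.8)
The plan is to pass to the Bloch-sphere representation and thereby collapse the optimization over all rank-one projectors into a one-variable problem. First I would write a pure input as $P=\frac12(\mathbb{I}+\vec n\cdot\vec\sigma)$ with $|\vec n|=1$, so that $\Tr P=1$ and $\Tr(P\sigma_i)=n_i$. Because for a pure projector the Uhlmann fidelity reduces to $F(P,\Lambda[P])=\Tr(P\Lambda[P])$, exactly as recorded in \eqref{channelfidelity}, it suffices to extremize the overlap $\Tr(P\Lambda[P])$ over the unit sphere.

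Applying the channel action \eqref{eq:actionofLambda}, the output $\Lambda[P]$ carries the Bloch vector $\vec m=(\lambda_1 n_1,\lambda_1 n_2,\lambda_\ast+\lambda_3 n_3)$. Using the standard overlap identity $\Tr\!\big[\tfrac12(\mathbb{I}+\vec n\cdot\vec\sigma)\tfrac12(\mathbb{I}+\vec m\cdot\vec\sigma)\big]=\tfrac12(1+\vec n\cdot\vec m)$, I would obtain
\[
\Tr(P\Lambda[P])=\frac12\left[1+\lambda_1(n_1^2+n_2^2)+\lambda_\ast n_3+\lambda_3 n_3^2\right].
\]
The decisive simplification is that the constraint $n_1^2+n_2^2+n_3^2=1$ removes the transverse components, leaving a function of $t:=n_3\in[-1,1]$ alone,
\[
g(t)=\frac12\left[1+\lambda_1+\lambda_\ast t+(\lambda_3-\lambda_1)t^2\right].
\]

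The remaining work is the extremization of this quadratic on $[-1,1]$. I would locate the vertex at $t^\ast=-\lambda_\ast/\big(2(\lambda_3-\lambda_1)\big)$, with value $g(t^\ast)=\frac12\big[1+\lambda_1-\lambda_\ast^2/(4(\lambda_3-\lambda_1))\big]$, and the endpoint values $g(\pm1)=\frac12(1+\lambda_3\pm\lambda_\ast)$. For the minimum, the sign of the leading coefficient $\lambda_3-\lambda_1$ decides whether $g$ is convex, in which case the minimizer is the interior vertex provided $t^\ast\in[-1,1]$, i.e. $|\lambda_\ast|\le 2(\lambda_3-\lambda_1)$, or whether the minimum is attained at an endpoint, the smaller of the two being $\frac12(1+\lambda_3-|\lambda_\ast|)$. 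The maximum follows by the mirror argument: the interior optimum is relevant precisely when $\lambda_3<\lambda_1$ and $|\lambda_\ast|\le 2(\lambda_1-\lambda_3)$, and otherwise the larger endpoint value $\frac12(1+\lambda_3+|\lambda_\ast|)$ wins. Collecting the cases reproduces \eqref{fidmin} and \eqref{fidmax}.

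I would expect no conceptual obstacle; essentially all the effort is bookkeeping of the piecewise cases. The one point that demands care is confirming that the vertex genuinely lies in $[-1,1]$ exactly under the stated thresholds, and checking that the complete-positivity and trace-preservation constraints $|\lambda_\ast|+|\lambda_3|\le1$, $4\lambda_1^2+\lambda_\ast^2\le(1+\lambda_3)^2$ keep every candidate value inside $[0,1]$. This also identifies the extremal states explicitly: any $\vec n$ with $n_3=t^\ast$ (free azimuthal angle) in the interior regime, and the poles $n_3=\pm1$ in the boundary regime.
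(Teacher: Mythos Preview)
Your argument is correct and essentially coincides with the paper's proof: both parameterize pure states on the Bloch sphere, reduce $\Tr(P\Lambda[P])$ to a quadratic in the single variable $n_3$ (the paper's $x_3$), and then compare the interior critical point with the endpoints $\pm 1$ according to the sign of $\lambda_3-\lambda_1$. Your additional remarks on the azimuthal degeneracy of the optimizers and on the CP constraints mirror the paper's Remark~1.
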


\begin{proof}
Take a pure state represented by a rank-1 projector
\begin{equation}\label{proj}
P=\frac 12 \left(\mathbb{I}+\sum_{k=1}^3x_k\sigma_k\right),
\end{equation}
where $x_k$ are real numbers such that $\sum_{k=1}^3x_k^2=1$.
The action of a phase-covariant channel $\Lambda$ onto $P$ produces
\begin{equation}\label{akcja}
\Lambda[P]=\frac 12 \left(\mathbb{I}+\lambda_{\ast}\sigma_3 +\lambda_1x_1\sigma_1+\lambda_1x_2\sigma_2+\lambda_3x_3\sigma_3\right),
\end{equation}
and hence the corresponding channel fidelity is, by definition,
\begin{equation}
F(P,\Lambda[P])=\Tr(P\Lambda[P])=\frac 12 [1+\lambda_1(x_1^2+x_2^2)+\lambda_3x_3^2+\lambda_{\ast}x_3].
\end{equation}
The next step is to find $x_3$ that minimize or maximize $F$. For simplicity, we introduce the function
\begin{equation}
G(P,\Lambda[P])=2F(P,\Lambda[P])-1=\lambda_1(1-x_3^2)+\lambda_3x_3^2+\lambda_{\ast}x_3
=(\lambda_3-\lambda_1)x_3^2+\lambda_{\ast}x_3+\lambda_1,
\end{equation}
which reaches its extremal values for the same parameters $x_3$ as the function $F(P,\Lambda[P])$. From now on, we consider $G(P,\Lambda[P])$ as the function of $x_3$ with fixed channel parameters and denote it by $G(x_3)$. Recall that extremal points of a function are found by equating the first derivative to zero and checking the sign of the second derivative at that point. In our case, the first derivative results in a single critical point,
\begin{equation}\label{x3}
G^\prime(x_3)=2(\lambda_3-\lambda_1)x_3+\lambda_{\ast}=0\qquad\implies\qquad
x_3=-\frac{\lambda_{\ast}}{2(\lambda_3-\lambda_1)}.
\end{equation}
Note that $|x_3|\leq 1$, which, together with eq. (\ref{x3}), gives us an additional constraint for the channel parameters,
\begin{equation}
|\lambda_\ast|\leq 2|\lambda_3-\lambda_1|.
\end{equation}
Calculating the second derivative yields
\begin{equation}
G^{\prime\prime}\left(x_3=-\frac{\lambda_{\ast}}{2(\lambda_3-\lambda_1)}\right)
=2(\lambda_3-\lambda_1).
\end{equation}
This way, we found a local minimum for $\lambda_3>\lambda_1$ and a local maximum for $\lambda_3<\lambda_1$. For $\lambda_3=\lambda_1$, $G(x_3)$ is a linear function, so there are no local extrema. Due to the domain of $x_3$ being closed, the global extrema of $G$ are reached either at the local extremal points or at the endpoints $x_3=\pm 1$, where $G(x_3=\pm 1)=\lambda_3\pm\lambda_{\ast}$. Our results can be summarized as follows,
\begin{equation}
f_{\min}(\Lambda)=\left\{
\begin{aligned}
&\frac 12 \min\left\{1+\lambda_1-\frac{\lambda_{\ast}^2}{4(\lambda_3-\lambda_1)},
1+\lambda_3-|\lambda_{\ast}|\right\}\qquad{\rm for}\qquad \lambda_3>\lambda_1,\, |\lambda_\ast|\leq 2(\lambda_3-\lambda_1),\\
&\frac 12 (1+\lambda_3-|\lambda_{\ast}|)\qquad{\rm otherwise},
\end{aligned}\right.
\end{equation}
\begin{equation}
f_{\max}(\Lambda)=\left\{
\begin{aligned}
&\frac 12 \max\left\{1+\lambda_1+\frac{\lambda_{\ast}^2}{4(\lambda_1-\lambda_3)},
1+\lambda_3+|\lambda_{\ast}|\right\}\qquad{\rm for}\qquad \lambda_3<\lambda_1,\, |\lambda_\ast|\leq 2(\lambda_1-\lambda_3),\\
&\frac 12 (1+\lambda_3+|\lambda_{\ast}|)\qquad{\rm otherwise}.
\end{aligned}
\right.
\end{equation}
Finally, observing that the first term in the curly brackets is always minimal for $f_{\min}(\Lambda)$ and maximal for $f_{\max}(\Lambda)$, we recover the formulas from eqs. (\ref{fidmin}--\ref{fidmax}).
\end{proof}

\begin{Remark}
The minimal and maximal channel fidelities on pure inputs are reached on a one-parameter family of rank-1 projectors. Every projector $P$ is characterized by three parameters $x_k$, $k=1,2,3$, from which only $x_3$ is fixed in the process of finding the extremal points of $F(P,\Lambda[P])$. Due to the constraint $\sum_{k=1}^3x_k^2=1$ from eq. (\ref{proj}), $x_2$ depends on the choice of $x_1$. Hence, we are left with a free parameter $x_1$ that changes between $\pm\sqrt{1-x_3^2}$.
\end{Remark}

For $\lambda_\ast=0$, one recovers the formulas for the Pauli channels \cite{norms,fidelity},
\begin{equation}
f_{\min}(\Lambda)=\frac{1+\lambda_{\min}}{2},\qquad
f_{\max}(\Lambda)=\frac{1+\lambda_{\max}}{2},
\end{equation}
where $\lambda_{\max}=\max_{k=1,3}\lambda_k$ and $\lambda_{\min}=\min_{k=1,3}\lambda_k$. Note that, contrary to the case with $\lambda_\ast\neq 0$, $f_{\min}(\Lambda)$ and $f_{\max}(\Lambda)$ depend only on a single eigenvalue.

\subsection{Maximal output purity}

The purity measures how close a given state is to a pure state. This question can also be applied to quantum channels $\Lambda$, where one checks the purity of the output $\Lambda[P]$ for pure inputs $P$. The higher the purity of the output, the less distorted is the input. However, one is usually interested in the best case scenario, which corresponds to the maximal output purity. This property is measured by the maximal output $p$-norm defined by
\begin{equation}
\nu_p(\Lambda):=\max_P\|\Lambda[P]\|_p,
\end{equation}
where the Schatten $p$-norm reads \cite{TQI,Bhatia}
\begin{align}
&\|\Lambda[P]\|_p:=(\Tr\Lambda[P]^p)^{1/p},\qquad 1\leq p<\infty,\\
&\|\Lambda[P]\|_\infty:=\max_Q\Tr(Q\Lambda[P]),
\end{align}
and $P$ and $Q$ are a rank-1 projectors. Here, let us consider two of the most popular choices: $p=2$ and $p=\infty$.

\begin{Theorem}\label{Th2}
The maximal output $2$-norm of phase-covariant channels satisfies
\begin{equation}\label{nu2}
\nu_2^2(\Lambda)=\left\{
\begin{aligned}
&\frac 12 \left(1+\lambda_1^2
+\frac{\lambda_1^2\lambda_\ast^2}{\lambda_1^2-\lambda_3^2}\right),\qquad |\lambda_1|>|\lambda_3|,\,|\lambda_3\lambda_\ast|\leq \lambda_1^2-\lambda_3^2,\\
&\frac 12 (1+\lambda_3^2+\lambda_\ast^2+2|\lambda_3\lambda_\ast|)\qquad{\rm otherwise}.
\end{aligned}
\right.
\end{equation}
\end{Theorem}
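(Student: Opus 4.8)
The plan is to follow exactly the route used in the proof of Theorem~\ref{Th1}, reusing the parametrization \eqref{proj} of a pure input $P$ and its image \eqref{akcja} under $\Lambda$ in the Pauli basis. The only new ingredient is that, instead of the overlap $\Tr(P\Lambda[P])$, I now need the squared Schatten $2$-norm. Since $\Lambda[P]$ is a density operator, hence positive semidefinite, one has $\|\Lambda[P]\|_2^2=\Tr\Lambda[P]^2$, and because squaring is monotone on nonnegative numbers the maximization commutes with it, so $\nu_2^2(\Lambda)=\max_P\|\Lambda[P]\|_2^2=\max_P\Tr\Lambda[P]^2$.

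First I would compute this trace. Writing any Hermitian $2\times2$ matrix as $\tfrac12(a_0\mathbb{I}+\vec a\cdot\vec\sigma)$ gives $\Tr M^2=\tfrac12(a_0^2+|\vec a|^2)$, using $(\vec a\cdot\vec\sigma)^2=|\vec a|^2\mathbb{I}$. From \eqref{akcja} the Bloch vector of $\Lambda[P]$ has components $(\lambda_1x_1,\lambda_1x_2,\lambda_3x_3+\lambda_\ast)$ with $a_0=1$, so the constraint $x_1^2+x_2^2=1-x_3^2$ collapses everything to a function of $x_3$ alone,
\[
2\Tr\Lambda[P]^2=1+\lambda_1^2+\lambda_\ast^2+(\lambda_3^2-\lambda_1^2)x_3^2+2\lambda_3\lambda_\ast x_3 .
\]
Thus the problem reduces to maximizing the quadratic $h(x_3)=(\lambda_3^2-\lambda_1^2)x_3^2+2\lambda_3\lambda_\ast x_3$ over the closed interval $x_3\in[-1,1]$, in complete parallel with the treatment of $G(x_3)$ in Theorem~\ref{Th1}.

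Next I would locate the interior critical point $h'(x_3)=0$, giving $x_3=\lambda_3\lambda_\ast/(\lambda_1^2-\lambda_3^2)$, and read off the curvature $h''=2(\lambda_3^2-\lambda_1^2)$. This point is a maximum precisely when $|\lambda_1|>|\lambda_3|$, and it lies in the admissible range $|x_3|\le1$ precisely when $|\lambda_3\lambda_\ast|\le\lambda_1^2-\lambda_3^2$. Substituting it back and collecting the $\lambda_\ast^2$ terms via $\lambda_\ast^2+\lambda_3^2\lambda_\ast^2/(\lambda_1^2-\lambda_3^2)=\lambda_1^2\lambda_\ast^2/(\lambda_1^2-\lambda_3^2)$ reproduces the first branch of \eqref{nu2}. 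In every remaining situation---either $|\lambda_1|\le|\lambda_3|$, so that $h$ is upward-opening or linear and has no interior maximum, or $|\lambda_1|>|\lambda_3|$ but the vertex falls outside $[-1,1]$---the maximum sits at an endpoint, where $h(\pm1)=\lambda_3^2-\lambda_1^2\pm2\lambda_3\lambda_\ast$, hence $\max\{h(1),h(-1)\}=\lambda_3^2-\lambda_1^2+2|\lambda_3\lambda_\ast|$, giving the second branch.

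I do not expect a genuine obstacle: this is an elementary single-variable optimization of the same type as Theorem~\ref{Th1}. The only care needed is bookkeeping of cases—verifying that the two distinct failure modes of the interior maximum (wrong curvature versus vertex outside $[-1,1]$) both collapse into the single ``otherwise'' endpoint formula, and that whenever the interior critical value is admissible it indeed dominates the endpoints, which is automatic because a downward-opening parabola attains its global maximum at its vertex. Confirming the sign bookkeeping that turns the endpoint evaluation into $2|\lambda_3\lambda_\ast|$ (the vertex always lies on the side of the larger endpoint) then completes the match with \eqref{nu2}.
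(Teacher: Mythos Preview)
Your proposal is correct and follows essentially the same route as the paper: compute $\Tr\Lambda[P]^2$ via the Bloch representation, reduce to a quadratic in $x_3$ on $[-1,1]$, locate the interior critical point and its curvature sign, and handle the endpoint cases. The only cosmetic difference is that the paper defines $K=2\Tr\Lambda[P]^2-1$ (your $h$ plus the constant $\lambda_1^2+\lambda_\ast^2$) and separately verifies the inequality between the vertex value and the endpoint values, whereas you invoke the downward-opening-parabola argument directly; the content is the same.
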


\begin{proof}
Using eq. (\ref{akcja}), we find
\begin{equation}
\tr(\Lambda[P]^2)=\frac 12 \left[1+(\lambda_3^2-\lambda_1^2)x_3^2+2\lambda_3\lambda_\ast x_3+\lambda_1^2+\lambda_\ast^2\right].
\end{equation}
Define an auxiliary function $K(\Lambda):=2\tr(\Lambda[P]^2)-1$ whose extremal points coincide with that of $\nu_2^2(\Lambda)$. To find the extremas of $(\Lambda)\equiv K(x_3)$, we calculate the first and second derivatives with respect to $x_3$;
\begin{align}
K^\prime(x_3)&=2(\lambda_3^2-\lambda_1^2)x_3+2\lambda_\ast \lambda_3=0\qquad\implies\qquad  x_3=-\frac{\lambda_3\lambda_\ast}{\lambda_3^2-\lambda_1^2},\label{eq:funK1}\\
K^{\prime\prime}(x_3)&=2(\lambda_3^2-\lambda_1^2).
\end{align}
Since $|x_3|\leq 1$, the above formula for $x_3$ provides an additional constraint on the channel parameters,
\begin{equation}
|\lambda_3\lambda_\ast|\leq |\lambda_3^2-\lambda_1^2|.
\end{equation}
Now, if $\lambda_3^2>\lambda_1^2$, then we obtain the local minimum, whereas $\lambda_3^2<\lambda_1^2$ gives rise to the local maximum. However, if $\lambda_3^2=\lambda_1^2$, then there are no local extrema. In this case, the global extremal points are reached on the endpoints of the domain $x_3=\pm 1$, where the function $K$ takes the values
\begin{equation}
K(x_3=\pm 1)=\lambda_\ast^2+\lambda_3^2\pm 2\lambda_3\lambda_\ast.
\end{equation}
Therefore, the formula for the maximal output 2-norm reads
\begin{equation}
\nu_2^2(\Lambda)=\left\{
\begin{aligned}
&\frac 12 \max\left\{1+K\left(x_3=-\frac{\lambda_3\lambda_\ast}{\lambda_3^2-\lambda_1^2}\right),
1+\lambda_\ast^2+\lambda_3^2+2|\lambda_3\lambda_\ast|\right\},\qquad |\lambda_1|>|\lambda_3|,\,|\lambda_3\lambda_\ast|\leq |\lambda_1^2-\lambda_3^2|,\\
&\frac 12 \max\{1+\lambda_\ast^2+\lambda_3^2+2|\lambda_3\lambda_\ast|\}\qquad{\rm otherwise}.
\end{aligned}
\right.
\end{equation}
Observing that, in the range provided by the first line of this equation,
\begin{equation}
K\left(x_3=-\frac{\lambda_3\lambda_\ast}{\lambda_3^2-\lambda_1^2}\right)=\lambda_1^2
\left(1+\frac{\lambda_\ast^2}{\lambda_1^2-\lambda_3^2}\right)\geq
\lambda_\ast^2+\lambda_3^2+2|\lambda_3\lambda_\ast|,
\end{equation}
we finally arrive at eq. (\ref{nu2}).
\end{proof}

After putting $\lambda_\ast=0$, one recovers the squared maximal output $2$-norm for the Pauli channels \cite{Ruskai},
\begin{equation}
\nu_2^2(\Lambda)=\frac 12 \left[1+\max_\alpha\lambda_\alpha^2\right].
\end{equation}
Unlike in the formula for $\lambda_\ast\neq 0$, here $\nu_2^2(\Lambda)$ depends only on the squared channel parameters.

\begin{Theorem}\label{Th3}
The maximal output $\infty$-norm of phase-covariant channels is equal to
\begin{equation}\label{nuinf}
\nu_{\infty}(\Lambda)=\frac 12 \left[1+\max\{|\lambda_1|,|\lambda_3\pm\lambda_\ast|\}\right].
\end{equation}
\end{Theorem}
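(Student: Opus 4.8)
The plan is to collapse the double maximization $\nu_\infty(\Lambda)=\max_P\max_Q\Tr\!\big(Q\,\Lambda[P]\big)$ to a single scalar problem, as in the proofs of Theorems~\ref{Th1} and~\ref{Th2}. First I would carry the optimization over $P$ onto the adjoint channel: since $\Tr(Q\Lambda[P])=\Tr(P\,\Lambda^\dagger[Q])$ and $\max_P\Tr(PM)$ over rank-1 projectors equals the largest eigenvalue $\lambda_{\max}(M)$ of a Hermitian $M$, one gets $\nu_\infty(\Lambda)=\max_Q\lambda_{\max}\!\big(\Lambda^\dagger[Q]\big)$. Reading $\Lambda$ off eq.~(\ref{actionofLambda}) in the Pauli basis, its adjoint sends a projector $Q=\tfrac12(\mathbb{I}+\sum_k s_k\sigma_k)$, $\sum_k s_k^2=1$, to $\Lambda^\dagger[Q]=\tfrac12\big[(1+\lambda_\ast s_3)\mathbb{I}+\lambda_1 s_1\sigma_1+\lambda_1 s_2\sigma_2+\lambda_3 s_3\sigma_3\big]$; the non-unital term $\lambda_\ast\sigma_3$ reappears as a shift of the identity component.

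Then I compute the largest eigenvalue directly, obtaining $\lambda_{\max}(\Lambda^\dagger[Q])=\tfrac12\big[1+\lambda_\ast s_3+\sqrt{\lambda_1^2(1-s_3^2)+\lambda_3^2 s_3^2}\,\big]$, which depends on $\vec s$ only through $s_3$. Hence $\nu_\infty(\Lambda)=\tfrac12\big(1+\max_{s_3\in[-1,1]}\phi(s_3)\big)$ with $\phi(s_3)=\lambda_\ast s_3+\sqrt{\lambda_1^2+(\lambda_3^2-\lambda_1^2)s_3^2}$. The three candidate values in eq.~(\ref{nuinf}) are precisely the ones at the symmetric and boundary arguments: $\phi(0)=|\lambda_1|$ and $\phi(\pm1)=|\lambda_3|\pm\lambda_\ast$, i.e.\ $|\lambda_3\pm\lambda_\ast|$ after taking the larger sign. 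If the maximum of $\phi$ is attained at $s_3\in\{0,\pm1\}$, eq.~(\ref{nuinf}) follows immediately.

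The step I expect to be the main obstacle is ruling out an interior maximizer of $\phi$. Differentiating gives $\phi'(s_3)=\lambda_\ast+(\lambda_3^2-\lambda_1^2)s_3/\sqrt{\lambda_1^2+(\lambda_3^2-\lambda_1^2)s_3^2}$, whose stationary condition yields $s_3^2=\lambda_1^2\lambda_\ast^2/\big[(\lambda_1^2-\lambda_3^2)(\lambda_1^2-\lambda_3^2+\lambda_\ast^2)\big]$ whenever $|\lambda_1|>|\lambda_3|$. The delicate part is to decide, case by case over the CPTP region $|\lambda_\ast|+|\lambda_3|\le1$ and $4\lambda_1^2+\lambda_\ast^2\le(1+\lambda_3)^2$, whether this stationary point lies in $[-1,1]$ with the sign that makes it a maximum, and—if so—to compare its value against $\{|\lambda_1|,|\lambda_3\pm\lambda_\ast|\}$. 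Unlike the purely rational comparison that settled Theorem~\ref{Th2}, here the square root makes the comparison subtler, so establishing that the boundary and symmetric candidates dominate (so that only the three terms in eq.~(\ref{nuinf}) survive) is the crux on which the clean form of the result rests.
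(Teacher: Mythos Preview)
Your reduction to the single--variable problem $\nu_\infty(\Lambda)=\tfrac12\big(1+\max_{s_3\in[-1,1]}\phi(s_3)\big)$ with $\phi(s_3)=\lambda_\ast s_3+\sqrt{\lambda_1^2+(\lambda_3^2-\lambda_1^2)s_3^2}$ is correct, and your instinct that the interior critical point is the crux is well founded. The paper's own proof sidesteps this issue entirely: it writes $\Tr(Q\Lambda[P])$ as a function of the Bloch coordinates $x_k,y_k$ of $P$ and $Q$, observes that the expression is linear in each coordinate separately, and concludes that the maximum is attained at ``edges'' $x_k=\pm1$, $y_k=\pm1$. That argument would be valid if $x$ and $y$ ranged over a cube, but the constraints $\sum_k x_k^2=\sum_k y_k^2=1$ are spheres, on which a linear functional is maximized at a point that is generically \emph{not} a coordinate vertex. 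So the paper's proof has a genuine gap precisely where you anticipated one.

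In fact the gap is fatal: the interior critical point can dominate, and eq.~(\ref{nuinf}) is not correct as stated. Take $\lambda_1=\tfrac12$, $\lambda_3=\lambda_\ast=\tfrac{1}{10}$; both CPTP conditions hold. Then $\max\{|\lambda_1|,|\lambda_3\pm\lambda_\ast|\}=\tfrac12$, so eq.~(\ref{nuinf}) predicts $\nu_\infty=\tfrac34$. But the pure input with $x_3=1/24$, $x_2=0$, $x_1=\sqrt{1-x_3^2}$ yields an output Bloch vector of length $\sqrt{\lambda_1^2+\lambda_1^2\lambda_\ast^2/(\lambda_1^2-\lambda_3^2)}=\sqrt{25/96}\approx0.5103$, hence $\|\Lambda[P]\|_\infty\approx0.7552>\tfrac34$. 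The correct relation, visible from your setup (or equivalently by maximizing $\|\Lambda[P]\|_\infty$ directly), is
\[
\nu_\infty(\Lambda)=\tfrac12\Big[1+\sqrt{2\nu_2^2(\Lambda)-1}\,\Big],
\]
so the piecewise structure of Theorem~\ref{Th2} reappears intact in $\nu_\infty$. Your difficulty in closing the comparison is therefore not a shortcoming of your method; it reflects an error in the claimed result, and your more careful analysis exposes it.
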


\begin{proof}
Let us take two rank-1 projectors,
\begin{equation}
P=\frac 12 \left(\mathbb{I}+\sum_{k=1}^3x_k\sigma_k\right),\qquad
Q=\frac 12 \left(\mathbb{I}+\sum_{k=1}^3y_k\sigma_k\right),\qquad
\sum_{k=1}^3x_k^2=\sum_{k=1}^3ky_k^2=1.
\end{equation}
In what follows, we make use of the trace condition $0\leq\Tr(PQ)\leq 1$, which is equivalent to
\begin{equation}
-1\leq\sum_{k=1}^3x_ky_k\leq 1.
\end{equation}
On the other hand, we find
\begin{equation}\label{trqp}
\Tr(Q\Lambda[P])=\frac 12 \left[1+\lambda_\ast y_3+\lambda_1x_1y_1
+\lambda_1x_2y_2+\lambda_3x_3y_3\right].
\end{equation}
From the form of $\Tr(Q\Lambda[P])$, it is easy to see that it has no local extrema in the projectors' parameters (due to being a linear function in all $x_k$, $y_k$). Hence, the global maximum is reached on one of the edges: $x_k=\pm 1$, $y_k=\pm 1$. After making this substitution in eq. (\ref{trqp}), one arrives at eq. (\ref{nuinf}).
\end{proof}

The formula for the maximal output $\infty$-norm
\begin{equation}
\nu_{\infty}(\Lambda)=\frac 12 \left[1+\max_{\alpha=1,3}|\lambda_\alpha|\right]
\end{equation}
for $\lambda_\ast=0$ was derived in \cite{norms}. There, it was also observed that for the Pauli channels one has $\nu_\infty=f_{\max}$ if $\max\lambda_\alpha=\max|\lambda_\alpha|$. Interestingly, an analogical comparison exists for phase-covariant channels. Namely, $\nu_\infty=f_{\max}$ if either $\lambda_\ast=0$ or $\lambda_3\geq|\lambda_1|+|\lambda_\ast|$.

\subsection{Concurrence}

Assume that we extend our qubit system by composing it with another qubit system. The first subsystem evolves according to a phase-covariant channel while the second subsystem remains unchanged. If initially the qubit pair is maximally entangled, then the total state changes according to $\rho_W\mapsto\rho_W^\prime=(\oper\otimes\Lambda)[\rho_W]$, where $\rho_W=(1/4)\sum_{i,j=0}^1 |ii\>\<jj|$. The entanglement of formation between two qubit systems can be measured using Wootters' concurrence \cite{Wooters1,Wooters2}
\begin{equation}
c(\rho)=\max\{0,\sqrt{r_1}-\sqrt{r_2}-\sqrt{r_3}-\sqrt{r_4}\},
\end{equation}
where $r_1\geq r_2\geq r_3\geq r_4$ are the eigenvalues of $X(\rho):=\rho(\sigma_2\otimes\sigma_2)\overline{\rho}(\sigma_2\otimes\sigma_2)$. 
Observe that, in terms of the Pauli matrices, the state $\rho_W$ has the form
\begin{equation}
\rho_W=\frac{1}{4}\left(\mathbb{I}\otimes\mathbb{I}+\sigma_1\otimes \sigma_1-\sigma_2\otimes \sigma_2+\sigma_3\otimes \sigma_3\right),
\end{equation}
and therefore it is straightforward to show that its evolution is given by
\begin{equation}
\rho_W^\prime=\frac{1}{4}\left(\mathbb{I}\otimes\mathbb{I}
+\lambda_{\ast}\mathbb{I}\otimes\sigma_3
+\lambda_1\sigma_1\otimes\sigma_1-\lambda_1\sigma_2\otimes\sigma_2
+\lambda_3\sigma_3\otimes\sigma_3\right).
\end{equation}
In the computational basis, $X[\rho_W^\prime]$ is represented by the matrix
\begin{equation}
X[\rho_W^\prime]=
\frac{1}{16}\begin{pmatrix}
4\lambda_1^2+(1+\lambda_3)^2-\lambda_\ast^2 & 0 & 0 & 4\lambda_1(1+\lambda_3+\lambda_\ast) \\
0 & (1-\lambda_3)^2-\lambda_\ast^2 & 0 & 0 \\
0 & 0 & (1-\lambda_3)^2-\lambda_\ast^2 & 0 \\
4\lambda_1(1+\lambda_3-\lambda_\ast) & 0 & 0 & 4\lambda_1^2+(1+\lambda_3)^2-\lambda_\ast^2
\end{pmatrix},
\end{equation}
whose eigenvalues read
\begin{equation}
\begin{split}
R_1=R_2=\frac{1}{16}\Big[(1-\lambda_3)^2-\lambda_\ast^2\Big],\qquad
R_\pm=\frac{1}{16}\Big[2\lambda_1\pm
\sqrt{(1+\lambda_3)^2-\lambda_\ast^2}\Big]^2.
\end{split}
\end{equation}
Due to $R_+\geq R_1=R_2\geq R_-$, the corresponding formula for concurrence reduces to
\begin{equation}\label{crhoW}
c[\rho_W^\prime]=\frac{1}{2}\max\left\{0,2|\lambda_1|-\sqrt{(\lambda_3-1)^2-\lambda_\ast^2}\right\}.
\end{equation}
If one takes $\lambda_\ast=0$, the above equation reproduces the concurrence
\begin{equation}
c[\rho_W^\prime]=\max\{0,2|\lambda_1|+\lambda_3-1\}
\end{equation}
of the Pauli channels satisfying $\Lambda[\sigma_2]=\lambda_1\sigma_2$.

\section{Applications: Using non-unitality to improve channel performance}
\label{sec:Applications}

The measures of fidelity, purity, and entanglement derived in previous section depend on the channel eigenvalues $\lambda_1$, $\lambda_3$, as well as the parameter $\lambda_\ast$ that vanishes for unital channels. Therefore, a question arises: given two quantum maps, one unital and one non-unital, can we determine which one has a better performance in quantum communication tasks according to those measures?
To answer this, consider two phase-covariant qubit channels: a unital (Pauli) channel $\Lambda_{\rm U}$ and a non-unital channel $\Lambda_{\rm NU}$. Assume that these channels have common eigenvalues and share three eigenvectors, so that
\begin{equation}
\Lambda_{\rm U}[\sigma_k]=\lambda_k\sigma_k,\qquad 
\Lambda_{\rm NU}[\sigma_k]=\lambda_k\sigma_k,\qquad k=1,2,3\qquad (\lambda_2\equiv\lambda_1).
\end{equation}
The final eigenvector (to the eigenvalue $\lambda_0=1$) is associated with the invariant state of the channel and depends on the value of $\lambda_\ast$. For $\Lambda_{\rm U}$, the invariant state is the maximally mixed state $\rho_0=\mathbb{I}/2$. However, for $\Lambda_{\rm NU}$, the invariant state is instead given by $\rho_\ast$ in eq. (\ref{rhoast}).

Due to the fact that $\Lambda_{\rm U}$ and $\Lambda_{\rm NU}$ differ only in one parameter, we can easily compare the results of Section 3 for the corresponding measures.

\begin{Remark}
Non-unital phase-covariant channels present a better performance than their unital counterparts when the maximal fidelity, maximal output purity, and concurrence are measured. The opposite behavior is observed for the minimal fidelity, which decreases for non-zero $\lambda_\ast$.
\end{Remark}

From now on, assume that $\Lambda_{\rm NU}$ is {\it maximally non-unital}; that is, its parameter $\lambda_\ast$ admits the highest absolute value
\begin{equation}
|\lambda_\ast|=1-|\lambda_3|,
\end{equation}
which follows from the complete positivity conditions for the phase-covariant channels. Now, to construct a channel $\Lambda$ with intermediate values of $\lambda_\ast$, we take convex combinations of $\Lambda_{\rm NU}$ and $\Lambda_{\rm U}$. The resulting channel
\begin{equation}\label{mix}
\Lambda=(1-p)\Lambda_{\rm U}+p\Lambda_{\rm NU},\qquad 0\leq p\leq 1,
\end{equation}
shares its eigenvalues with both $\Lambda_{\rm U}$ and $\Lambda_{\rm NU}$. Moreover, the parameter that characterizes its non-unitality satisfies the formula
\begin{equation}
\lambda_\ast^{\pm}=\pm p(1-|\lambda_3|).
\end{equation}
Hence, $\lambda_\ast^{\pm}$ can be treated as a measure of $\Lambda(t)$'s non-unitality. The greater $p$ we take, the more non-unital is the mixture, with $p=0$ and $p=1$ corresponding to the unital and maximally non-unital maps, respectively. This notion can be generalized to all non-unital phase-covariant maps.

\begin{Remark}
For phase-covariant qubit maps, we introduce the measure of non-unitality
\begin{equation}
\mathrm{NU}(\Lambda)=\frac{|\lambda_\ast|}{1-|\lambda_3|},
\end{equation}
which determines their degree of non-unitality. In particular, if $\mathrm{NU}(\Lambda)=1$, then $\Lambda$ is maximally non-unital. On the other hand, $\mathrm{NU}(\Lambda)=0$ corresponds to unital maps.
\end{Remark}

In general, quantum channels are used to describe the dynamics of open quantum systems; that is, systems that interact with an external environment. Continuous time-evolution is provided by dynamical maps, which are time-parameterized quantum channels $\Lambda(t)$ with the initial condition $\Lambda(0)=\oper$. Such maps are often solutions of dynamical equations called the {\it master equations}. Quantum systems with memoryless (Markovian) evolution satisfy the semigroup master equation $\dot{\Lambda}(t)=\mathcal{L}\Lambda(t)$ with a constant generator $\mathcal{L}$. The presence of strong system-environment interactions makes it necessary to consider more complicated equations, e.g. with time-dependent generators or memory kernels. For our considerations, however, the explicit form of master equations does not matter.

In what follows, we consider examples of phase-covariant dynamical maps given by eq. (\ref{mix}) and analyze the evolution of their purity, fidelity, and concurrence of the evolved maximally entangled state $\rho_W$.

\subsection{Example 1 -- Exponential decay}

\FloatBarrier

Let us consider a maximally non-unital dynamical map $\Lambda_{\rm NU}(t)$ characterized by
\begin{equation}\label{eigexp}
\lambda_1(t)=e^{-t},\qquad \lambda_3(t)=e^{-2t},\qquad \lambda_\ast(t)=1-e^{-2t},
\end{equation}
and a unital map $\Lambda_{\rm U}(t)$ that shares eigenvalues with $\Lambda_{\rm NU}(t)$. Observe that both channels are Markovian semigroups, where $\Lambda_{\rm NU}(t)$ corresponds to amplitude damping and $\Lambda_{\rm U}(t)$ to anisotropic dephasing. For any $0<p<1$, the mixture $\Lambda(t)$ is not a semigroup itself \cite{CC_GAD}. It is straightforward to derive the formulas for
\begin{itemize}
\item the minimal and maximal fidelities:
\begin{equation}
f_{\min}[\Lambda(t)]=\frac 12 [1-p+(1+p)e^{-2t}],\qquad f_{\max}[\Lambda(t)]=\left\{
\begin{aligned}
&\frac{1-e^{-2t}}{4(1-e^{-t})}[2+p^2\sinh t]
\qquad{\rm for}\qquad p\leq\frac{1-e^{-t}}{\sinh t},\\
&\frac 12 [1+p+(1-p)e^{-2t}]\qquad{\rm for}\qquad p>\frac{1-e^{-t}}{\sinh t};
\end{aligned}
\right.
\end{equation}
\item the maximal output purities:
\begin{equation}
\nu_2^2[\Lambda(t)]=\frac{1+p^2}{2}+\frac{1-p^2}{2}e^{-2t},\qquad \nu_\infty[\Lambda(t)]=\left\{
\begin{aligned}
&\frac 12 (1+e^{-t})
\qquad{\rm for}\qquad p\leq\frac{1-e^{-t}}{2\sinh t},\\
&\frac 12 [1+p+(1-p)e^{-2t}]\qquad{\rm for}\qquad p>\frac{1-e^{-t}}{2\sinh t}.
\end{aligned}
\right.
\end{equation}
\end{itemize}
Note that $f_{\min}$ and $\nu_2^2$ are given by simple expressions with exponential decay. In contrast, the formulas for $f_{\max}$ and $\nu_\infty$ are much more involved, having two potential outcomes depending on the values of $t$ and $p$. Despite the range conditions being implicit functions of time, both $f_{\max}$ and $\nu_\infty$ are continuous, as
\begin{equation}
\lim_{t\to t_\ast^{\pm}}f_{\max}[\Lambda(t)]=e^{-t_\ast}(1+\sinh t_\ast),\qquad
\lim_{t\to t_\ast^{\pm}}\nu_\infty[\Lambda(t)]=\frac 12 (1+e^{-t_\ast}),\qquad
\frac{1-e^{-t_\ast}}{\sinh t_\ast}\equiv p.
\end{equation}

Our results are plotted in Fig.\ref{exp}. It is clear that all the measures asymptotically decay in time and the curves corresponding to distinct values of $p$ cross only at $t=0$. Moreover, the minimal fidelity monotonically decreases with the increase of $p$, whereas all the other measures monotonically increase with $p$. Moreover, $f_{\max}$ and $\nu_\infty$ have the same asymptotical values. For $p=1$, $f_{\max}$, $\nu_2$, and $\nu_\infty$ reach their maximal value of 1. The only function that ever drops to zero is $f_{\min}[\Lambda_{\rm U}(t\to\infty)]$.

\begin{figure}[htb!]
   \includegraphics[width=0.8\textwidth]{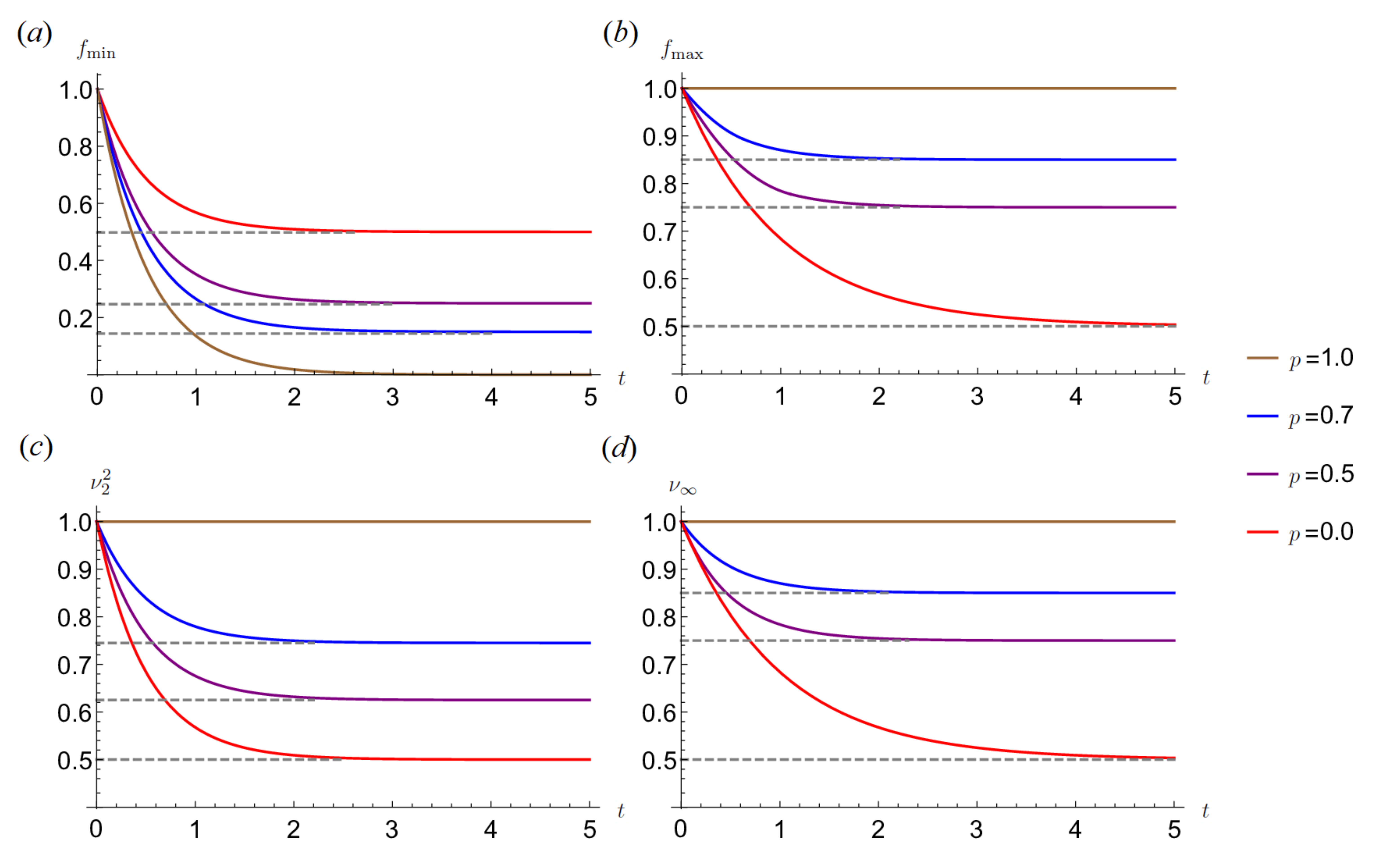}
\caption{Plots for exponentially decaying channel eigenvalues representing time-evolution of the minimal fidelity ($a$), maximal fidelity ($b$), maximal output 2-norm ($c$), and maximal output $\infty$-norm ($d$). The color curves correspond to the channel mixtures with $p=0$ (red), $p=0.5$ (purple), $p=0.7$ (blue), and $p=1$ (yellow).}
\label{exp}
\end{figure}

\FloatBarrier

\subsection{Example 2 -- Oscillations}

\FloatBarrier

This time, we take the dynamical map $\Lambda_{\rm NU}(t)$ whose parameters oscillate according to
\begin{equation}\label{eigosc}
\lambda_1(t)=\cos t,\qquad \lambda_3(t)=\cos^2t,\qquad \lambda_\ast(t)=\sin^2t,
\end{equation}
together with the unital map $\Lambda_{\rm U}(t)$ with the exact same eigenvalues. Note that both channels are non-invertible due to $\lambda_k(t)$ vanishing for finite times. The oscillatory behaviour manifests itself also in the associated measures, as one finds
\begin{itemize}
    \item the minimal fidelity:
  \begin{equation}
      f_{\min}[\Lambda(t)]=\left\{
\begin{aligned}
&\frac{\sin^2t(4\cos t+p^2\sin^2t)}{8\cos t(1-\cos t)}
\qquad{\rm for}\qquad -1<\cos t<0,\,p\leq\frac{2|\cos t|}{\sin^2t}(1-\cos t),\\
&\frac 12 [1+\cos^2t-p\sin^2t]\qquad{\rm otherwise};
\end{aligned}
\right.\\
  \end{equation}
    \item the maximal fidelity:
    \begin{equation}
       f_{\max}[\Lambda(t)]=\left\{
\begin{aligned}
&\frac{\sin^2t(4\cos t+p^2\sin^2t)}{8\cos t(1-\cos t)}
\qquad{\rm for}\qquad 0<\cos t<1,\,p\leq\frac{2\cos t}{\sin^2t}(1-\cos t),\\
&\frac 12 [1+\cos^2t+p\sin^2t]\qquad{\rm otherwise};
\end{aligned}
\right.\\ 
    \end{equation}
    \item the maximal output purities:
    \begin{equation}
        \nu_2^2[\Lambda(t)]=\frac 12 (1+\cos^2t+p^2\sin^2t),\qquad
\nu_\infty[\Lambda(t)]=\frac 12 (1+\max\{|\cos t|,|\cos^2t+p\sin^2t|\}).
    \end{equation}
\end{itemize}
This time, all the measures except for $\nu_2^2$ are given by relatively complicated formulas, even though the parameters of the dynamical map $\Lambda(t)$ are given by simple oscillations. Unlike in the previous example, the functions describing $f_{\min}$, $f_{\max}$, and $\nu_\infty$ are no longer smooth but piecewise analytic. Additionally, the conditions in the expressions for extremal fidelities depend not only on $p$ but also on the sign of the cosine function.

We plot our results in Fig.\ref{osc}. As expected, the channel measures demonstrate a similar oscillatory behaviour to that of $\lambda_1(t)$, $\lambda_3(t)$, and $\lambda_\ast(t)$. However, whereas these parameters and the extremal fidelities are $2\pi$-periodic, the maximal output norms are $\pi$-periodic instead. All the plotted curves cross at $t=k\pi$, $k\in\mathbb{N}$, with some being colinear for wider ranges of time. The discontinuity points for the extremal fidelities and $\nu_\infty$ at $p=0$ correspond to $\pi/2+k\pi$, $\pi\in\mathbb{N}$. Again, the higher the value of $p$, the smaller $f_{\min}$ and the greater the functions $f_{\max}$, $\nu_2$, $\nu_\infty$ at any fixed time. Just like for the exponentially decaying $\lambda_k(t)$, $f_{\max}$, $\nu_2$, and $\nu_\infty$ reach their maximal value of 1 for $p=1$. The only function that ever reaches zero is $f_{\min}[\Lambda_{\rm U}(t\to\infty)]$ for $\pi+2k\pi$ regardless of the choice of $p$, and then also for $\pi/2+2k\pi$ and $3\pi/2+2k\pi$ if $p=0$.

In \cite{Anindita}, non-monotonicity of the Gaussian channel fidelity implies non-Markovianity of the evolution. We obtain a similar correspondence for extremal channel fidelities, as Example 1 features a Markovian evolution (monotonic functions) and Example 2 deals with a non-Markovian evolution (non-monotonic functions) \cite{CC_GAD}.

\begin{figure}[htb!]
   \includegraphics[width=0.8\textwidth]{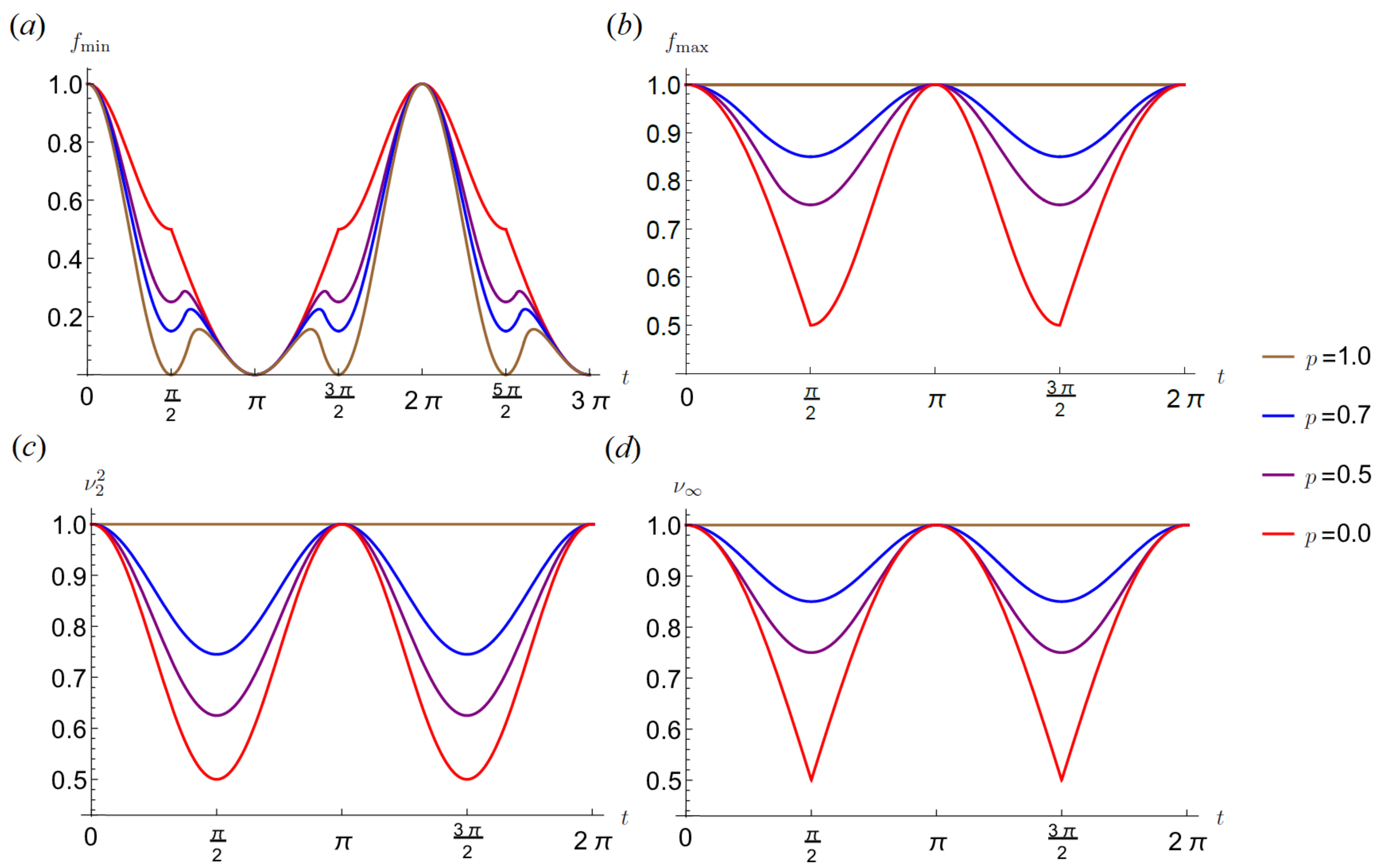}
\caption{Plots for oscillating channel eigenvalues representing time-evolution of the minimal fidelity ($a$), maximal fidelity ($b$), maximal output 2-norm ($c$), and maximal output $\infty$-norm ($d$). The curves correspond to the channel mixtures with $p=0$ (red), $p=0.3$ (black), $p=0.5$ (purple), $p=0.7$ (blue), and $p=1$ (yellow).}
\label{osc}
\end{figure}

\FloatBarrier

\subsection{Example 3 -- Entanglement death and rebirth}

\FloatBarrier

Finally, we take a closer look at what happens after sending a half of a maximally entangled qubit pair through the mixtures analyzed in the earlier examples. From eq. (\ref{crhoW}), we find that the concurrence is given by
\begin{equation}\label{c1}
c[\Lambda(t)[\rho_W]]=\max\{0,e^{-t}(1-\sqrt{1-p^2}\sinh t)\}
\end{equation}
for $\Lambda(t)$ defined via exponential functions in eq. (\ref{eigexp}) and by
\begin{equation}\label{c2}
c[\Lambda(t)[\rho_W]]=\frac 12 \max\{0,2|\cos t|-\sqrt{1-p^2}\sin^2 t\}
\end{equation}
if one instead takes the oscillating functions from eq. (\ref{eigosc}). In Fig.\ref{conc}.($a$)--($b$), we plot both functions for different values of $p$. Observe that in Fig.\ref{conc}.($a$), corresponding to exponentially decaying eigenvalues, the concurrence monotonically decays. With the increase of $p$, the entanglement of $\Lambda(t)[\rho_W]$ gets prolonged and its sudden death is postponed -- up until $p=1$, for which it eternally prevails. In Fig.\ref{conc}.($b$), on the other hand, we see that the state experiences periodical entanglement death and rebirth. The less non-unital the mixture is, the bigger the gap between entanglement death and rebirth. Moreover, the maximally entangled state is reached for $t=k\pi$, $k\in\mathbb{N}$, when $c[\Lambda(t)[\rho_W]]=1$.

\begin{figure}[htb!]
   \includegraphics[width=0.8\textwidth]{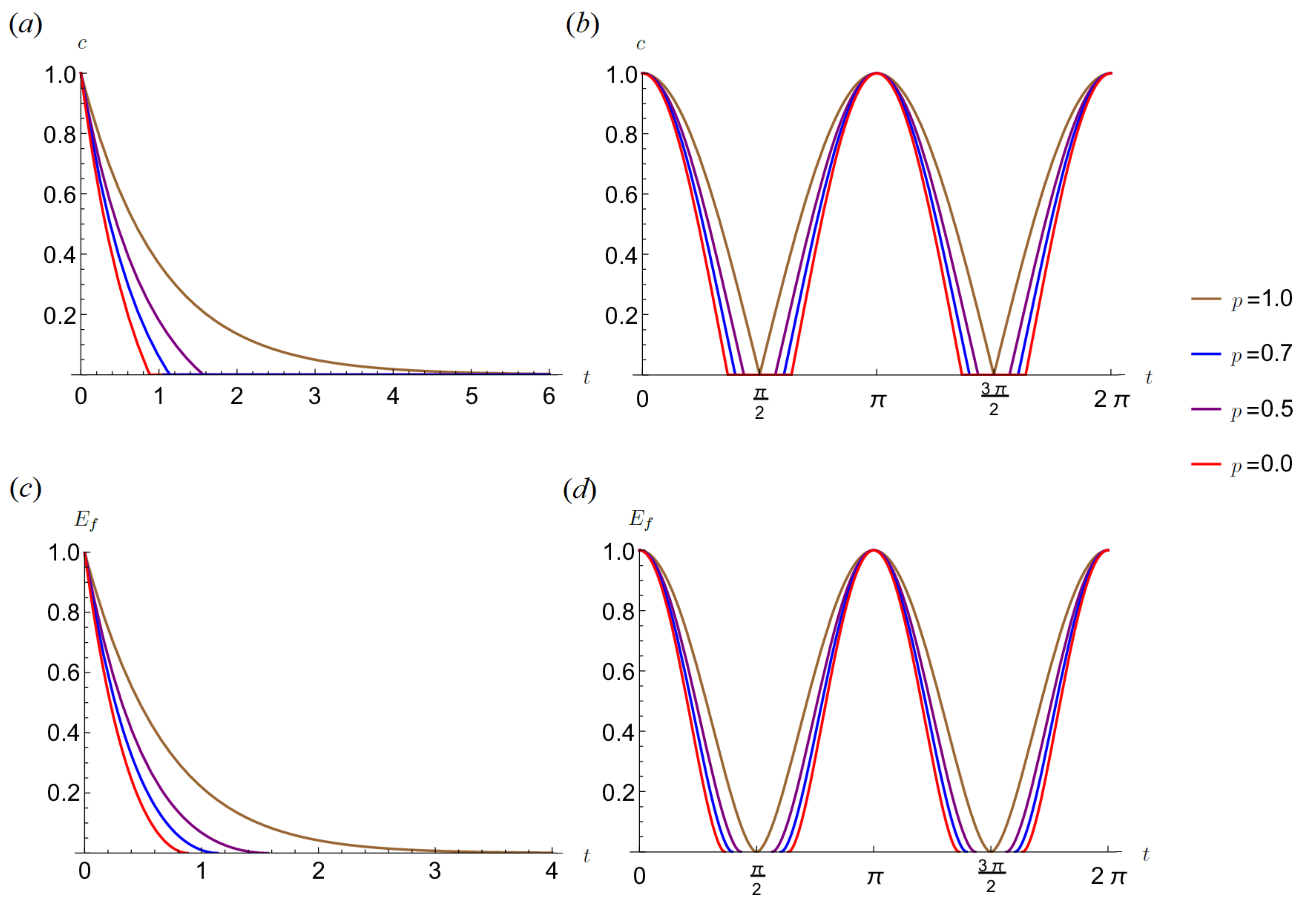}
\caption{Graphical representation of the evolution of concurrence $c[\Lambda(t)[\rho_W]]$ and the corresponding entanglement of formation $E_f(c)$ for exponentially decaying ($a$--$b$) and oscillating ($c$--$d$) channel eigenvalues. The color curves correspond to the channel mixtures with $p=0$ (red), $p=0.5$ (purple), $p=0.7$ (blue), and $p=1$ (yellow).}
\label{conc}
\end{figure}

The concurrence is directly related to another entanglement measure: entanglement of formation \cite{Wooters1,Audenaert}
\begin{equation}
E_f(\rho)=h\left[\frac{1+\sqrt{1-c^2(\rho)}}{2}\right],
\end{equation}
where
\begin{equation}
h(x)=-x\log_2x-(1-x)\log_2(1-x).
\end{equation}
The main difference between them is that only the entanglement of formation is a resource-based, information theoretic measure \cite{Wootters4}. In Fig.\ref{conc}.($c$)--($d$), we plot $E_f[\Lambda(t)[\rho_W]]$ based on the concurrence from eqs. (\ref{c1}) and (\ref{c2}), respectively. Observe that both measures reach their minimal and maximal values at the same points in time, even though their in-between values differ. Therefore, $E_f=0$ and $E_f=1$ again correspond to separable and maximally entangled states, respectively.

\FloatBarrier

\section{Conclusions}

We analyzed properties of phase-covariant channels with varying degrees of non-unitality. By fixing the channel eigenvalues and only changing its invariant state, we showed how to engineer the channel extremal fidelities, maximal output purity, and concurrence when the channel acted on one half of a maximally entangled state. We presented examples for mixing two semigroups and two non-invertible dynamical maps. Our results confirmed that, from among the measures we considered, only the minimal channel fidelity cannot be improved by introducing more non-unitality to the quantum channel. In other words, the more non-unital channels we took, the more pure and less distorted were the output states. This held true for any point in time, therefore this increase in channel performance was not only temporary, which was the case with engineering fidelity and classical capacity for unital maps \cite{Marshall,fidelity,Engineering_capacity}. Similarly, non-unital channels were better suited for prolonging quantum entanglement, even leading to its repetitive rebirth.

We claim that for non-unitality of quantum channels it is possible to formulate a research theory, similarly to non-invertibility in ref. \cite{invertibility_measure}. Recall that quantum resource theories are used to quantify desirable quantum effects, like quantum entanglement or non-Markovianity. We consider non-unitality as a dynamical quantum resource \cite{QRT} for phase-covariant dynamical maps. Unital maps can be identified with free operations, which are resource non-increasing. Our non-unitality measure from Remark 3 is a good candidate for a resource quantifier, as it is a continuous function on quantum maps that measures resourcefulness.

In further studies, it would be interesting to analyze other quantities that characterize channel performance, like von Neumann entropy or channel capacity. One could also check whether it is possible to engineer temporary increasing fidelity and purity via master equations with memory kernels, like for the Pauli channels in refs. \cite{Marshall,fidelity}. Another open question considers possible generalizations to qudit systems.

\section{Acknowledgements}

This research was funded in whole or in part by the National Science Centre, Poland, Grant numbers 2021/43/D/ST2/00102 (KS) and 2020/39/D/ST2/01234 (MS). For the purpose of Open Access, the author has applied a CC-BY public copyright licence to any Author Accepted Manuscript (AAM) version arising from this submission.

\bibliography{bibliography}
\bibliographystyle{beztytulow2}

\end{document}